\newtheorem{theorem}{Theorem}
\newtheorem{corollary}[theorem]{Corollary}
\newcommand{\waffect}{\texttt{waffect}}
\def\P{\mathbb{P}}
\title{Alternative Methods for H1 Simulations\\in Genome Wide Association Studies}
\author{V. Perduca \textsuperscript{a,}\footnote{Corresponding authors: Vittorio Perduca, Gregory Nuel, MAP5 Universit\'{e} Paris Descartes,  45 Rue des Saints P\`eres,  75006 Paris, France,  \tt{\{vittorio.perduca, gregory.nuel\}@parisdescartes.fr}} , C. Sinoquet \textsuperscript{b}, R. Mourad \textsuperscript{b,c}, G. Nuel \textsuperscript{a,*} \\ 
\small{\textsuperscript{a} MAP5 - UMR CNRS 8145 Universit\'{e} Paris Descartes, \textsuperscript{b} LINA - UMR CNRS 6241 Universit\'{e} de Nantes,} \\
\small{ \textsuperscript{c} Ecole Polytechnique de l'Universit\'{e} de Nantes}
}
\date{}
\begin{document}

\maketitle

\textbf{To appear in \textit{Human Heredity}.}
\bigskip 

\hrule width 0.15\textwidth height 0.15em

\vspace{-1.3em}

\subsubsection*{Key Words:}

\vspace{-0.7em}

Statistical Power~$\cdot$ Disease Model $\cdot$ Rejection $\cdot$ Markov Chain Monte Carlo (MCMC)~$\cdot$ Backward Sampling $\cdot$ Receiver Operating Characteristic (ROC) $\cdot$ Area Under the Curve (AUC)

\vspace{2em}

\hrule width 0.15\textwidth height 0.15em

\vspace{-1.3em}

\subsubsection*{Abstract:}

\vspace{-0.7em}

{\bf \emph{Objective}:} Assessing the statistical power to detect susceptibility variants plays a critical role in GWA studies both from the prospective and retrospective points of view. 
Power is empirically estimated by simulating phenotypes under a disease model H1. For this purpose, the ``gold'' standard consists in simulating genotypes given the phenotypes ({\it e.g.} Hapgen). We introduce here an alternative approach for simulating phenotypes under H1 that does not require generating new genotypes for each simulation. {\bf \emph{Methods}:} In order to simulate phenotypes with a fixed total number of cases and under a given disease model, we suggest three algorithms: i) a simple rejection algorithm; ii) a numerical Markov Chain Monte-Carlo (MCMC) approach; iii) and an exact and efficient backward sampling algorithm. 
In our study, we validated the three algorithms both on a toy-dataset and by comparing them with Hapgen on a more realistic dataset. As an application, we then conducted a simulation study on a \textit{1000 Genomes Project} dataset consisting of 629 individuals (314 cases) and 8,048 SNPs from Chromosome X. We arbitrarily defined an additive disease model with two susceptibility SNPs and an epistatic effect.
{\bf \emph{Results}:} The three algorithms are consistent, but backward sampling is dramatically faster than the other two. Our approach also gives consistent results with Hapgen. Using our application data, we showed that our limited design requires a biological {\it a priori} to limit the investigated region. We also proved that epistatic effects can play a significant role even when simple marker statistics ({\it e.g.} trend) are used. We finally showed that the overall performance of a GWA study strongly depends on the prevalence of the disease: the larger the prevalence, the better the power.
{\bf \emph{Conclusions}:} Our approach is a valid alternative to Hapgen-type methods; it is not only dramatically faster but also has two main advantages: 1) there is no need for sophisticated genotype models ({\it e.g.} haplotype frequencies, or recombination rates); 2) the choice of the disease model is completely unconstrained (number of SNPs involved, Gene-Environment interactions, hybrid genetic models, etc.). Our three algorithms are available in an ${\tt R}$ package called \waffect\ (``double-u affect'', for weighted affectations).

\subsubsection*{Introduction}

Genome-wide association (GWA) studies are a widely-used approach for localizing susceptibility variants responsible for common complex genetic diseases \cite{Morris}. Such studies involve investigating a huge number of genetics markers such as single nucleotide polymorphisms - SNPs (from hundreds of thousands to millions), for cohorts of cases and controls whose sizes range from thousands to tens of thousands of individuals. GWA studies have met with many successes, most notably for type 1 and type 2 diabetes, inflammatory bowel disease, prostate cancer and breast cancer \cite{Hindorff}.

In GWA studies, very high false positive rates are expected when there is no correction for multiple testing. Symmetrically, control for true negative rate - or power - is necessary. Power estimation is the key to evaluate the efficiency of GWA methods  \cite{Spencer}. The correct estimation of both rates must take into account the existence of high-dependency patterns between SNPs, or linkage disequilibrium (LD). The accurate estimation of the family wise type I error risk in the presence of LD consists in sampling the H0 distribution through permutations of phenotypes \cite{Zhang}. Thus, any association between loci and phenotypes is broken. This permutation strategy is implemented as a gold standard in numerous dedicated packages, together with software suites designed for GWA studies \cite{Aulchenko,Browning08,gonzales07,pollard04,Purcell}.

Power is an even more complicated function of several factors: study design, correlation patterns in the genotypic data, sizes of cohorts, frequency of the susceptibility allele, relative risk conferred by the causal factor, genetic model (additive, dominant, recessive, multiplicative) \cite{Lettre}. 
As a consequence, the analytical computation of power requires simplified assumptions, including the approximation of the test statistic distribution under H1 through a probability law \cite{Klein}. Most power calculators based on analytical approaches are used for two-stage GWA design, e.g. \cite{Menashe,steibel08}. Recently, an analytical approach has been proposed to account for LD, under either H0 or H1 approximation \cite{han09}: a fixed-size sliding window locally accounts for the inter-marker correlation. This approach brings an improvement over block-wise strategies by unifying H0 and H1 processings in the same framework \cite{conneely07,lin05,seaman05}. However, with regards to both accuracy and computational burden, the optimal choice of window size depends on the structure of the data. Moreover, LD blocks are often ambiguous. Thus, the previous sliding window approach cannot account for high-order dependencies between LD blocks. In particular, this method cannot be used to evaluate the power of any novel approach designed to cope with such high order dependencies in GWA studies. In this case, the only solution is to use  computationally intensive simulations.

Similar to sampling under H0, the simulation of the H1 distribution is an appealing way to keep the LD-structured genotypic data. These simulations consist in generating case and control samples which mimic the LD structure in human genome, i.e. in the creation of, say, $k$ datasets under the H1 assumption (at least one SNP is a susceptibility SNP). Nonetheless, breaking any association between a locus (or several loci) and the phenotype is far easier than to introduce such an association in a dataset.

Two main strategies have been developed to simulate H1. (i) The first one \cite{Hyam} consists in first generating a large sample of haplotypes conditional on reference haplotypes such as HapMap haplotypes \cite{Hapmap}. The haplotypes are then paired to build diplotypes. The disease status is affected depending on the penetrance model which involves a susceptibility SNP selected at random. (ii) The second strategy \cite{Spencer} consists in first randomly selecting a disease SNP and generating a fixed number of cases and controls. Then diplotypes are assigned at the disease SNP for cases and controls depending on the penetrance model. Finally, haplotypes are built (two for each diplotype) for all remaining SNPs of the chromosome, conditional on reference haplotypes. Nevertheless, both strategies entail problems of implementation when applied to power estimation in GWA studies. The drawback of the first strategy is that it does not make it possible to control the number of cases and controls. To tackle this issue, rejection sampling of case-control samples can be used, but this leads to a waste of time and data. An illustration of the first strategy is Fregene \cite{Hyam}. The second strategy makes it possible to control these numbers by first fixing them, but requires generating haplotypes for each simulation. The widely-used simulator Hapgen \cite{hapgen} implements the second strategy. 



To this aim, we propose a new method which randomly assigns $n_1$ cases and $n_0$ controls conditional on $n=n_0+n_1$ genotypes and a given disease model. 
Our idea is to affect the status of individual $i$ according to both the probability $\pi_i$ that $i$ is a case and to the constraint that the total number of cases is fixed. The probability $\pi_i$ is computed from the disease model and is proportional to the relative risk for individual $i$.
This method provides several advantages. 
First, it generalizes permutations, which is the gold standard for generating H0 distributions in order to control type I error in multiple testing. Indeed, permutations represent a particular case of our general approach which is obtained by using the same $\pi_i$ for all individuals. Secondly, our method is faster than previous ones \cite{Spencer,peng10} because it does not require a rejection sampling step or the generation of new genotypes for each simulation. Moreover, in the case of LD modeling-based mapping methods, LD pattern identification needs to be performed only once. Thirdly, no assumption is made on the genotype distribution, because the latter is the same for each simulation. The last advantage is that power can be directly assessed using real GWA datasets, such as those provided by the \textit{1000 Genomes Project} \cite{1000gen}, because disease status can be generated according to genotypes without loss due to rejection sampling. This is most valuable when evaluating the performance of new mapping methods. 

The paper is organized as follows. We first describe the method itself, which we called \waffect, then we validate our tool on a toy-example and compare its results to Hapgen \cite{hapgen} simulations. We finally illustrate the interest of our approach with a power study conducted on real GWA data.


\subsubsection*{Materials and Methods}

\vspace{1em}
\noindent{\it Detailed aims and notations}

We consider a GWA study with $n=n_0+n_1$ individuals ($n_0$ controls and $n_1$ cases), and $p$ SNPs. The $i^\text{th}$ phenotype is $Y_i \in \{0,1\}$ ($Y_i=1$ if individual $i$ is a case, $Y_i=0$ if individual $i$ is a control). The $i^\text{th}$ genotype is $X_i \in \{0,1,2\}^p$ (this corresponds to the number of rare alleles for each SNP).

Our disease model, thereafter denoted H1, is defined by the probability $\pi_i=\P(Y_i=1 | X_i)$ that individual $i$ is a case conditional on her/his genotype. In the particular case when all $\pi_i$ have the same non-negative value, the observed genotype has no effect on the phenotype. This case corresponds to the null hypothesis H0. In the particular case when we consider a single SNP ($p=1$), $\pi_i$ takes only three possible values: $\pi_i=f_0$ if $X_i=0$, $\pi_i=f_1=f_0 \text{RR}_1$ if $X_i=1$, and $\pi_i=f_2 = f_0\text{RR}_2$ if $X_i=2$, where $f_0$, $f_1$ and $f_2$ are the penetrances of the disease model, and $\text{RR}_1$ and $\text{RR}_2$ are the relative risks.

It is obviously easy to simulate under H1 using the independent Bernoulli distribution $\mathcal{B}(\pi_i)$ for each individual. Unfortunately, it is quite unlikely that when doing so the case-control design constraint $\mathcal{C}=\{\sum_{i=1}^n Y_i = n_1\}$ will be fulfilled.

For this reason, the standard strategy ({\it i.e.} Hapgen) consists in simulating genotypes conditional on phenotypes using Bayes' formula:
\begin{equation}
\P(X_i | Y_i)=\frac{\P(Y_i | X_i)\P(X_i)}{\sum_{x} \P(Y_i | X_i=x)P(X_i=x)}=\frac{\pi_i \P(X_i)}{\P(Y_i)}. \nonumber
\end{equation}
The advantage of this approach is that constraint $\mathcal{C}$ is automatically fulfilled, whereas the major drawback is that a genotype distribution $\P(X_i)$ must be introduced. In order to be realistic, this imposes sophisticated modeling using haplotype data and recombination rates. 

As an alternative, we suggest several ways to simulate phenotypes under H1 conditional on the genotypes (which therefore remain untouched) while respecting constraint $\mathcal{C}$. To do so, we introduce the notation $Z_j=\sum_{i=1}^j Y_i$ (with $Z_0=0$ by convention) so that the constraint becomes $\mathcal{C}=\{Z_n=n_1\}$.

\vspace{1em}
\noindent{\it Reject}

The first approach is straightforward: 1) draw $(Y_i)_{i=1\ldots n}$ using independent Bernoulli distributions; 2) if constraint $\mathcal{C}$ is fulfilled, retain  $(Y_i)_{i=1\ldots n}$ as a valid sample; if not,  simply reject it and return to 1). The problem with this algorithm is that samples will be rejected at rate $1-\P(\mathcal{C})$. The natural question is then: how (un-)likely is event $\mathcal{C}$?  

$\P(\mathcal{C})$ can be computed recursively by introducing the forward quantities: $F_{i}(m)=\P(Z_i=m)$. Starting with $F_0(m)$ being equal to $0$, except for $F_0(0)=1$, we get the following recursive formula for $1 \leqslant i \leqslant n$:
\begin{equation}
\label{eq:forw}
F_{i}(m)=F_{i-1}(m-1)\pi_i + F_{i-1}(m)(1-\pi_i).
\end{equation}
It follows that the probability of the constraint is $\P(\mathcal C)=P(Z_n=n_1)=F_n(n_1)$.

This approach is easy to understand and implement, however it usually has a very low rate of success. This is the case when the design is large and the $\pi_i$s are low (for instance in the case of a small prevalence). In practice, its use will be limited to small toy-examples.

\vspace{1em}
\noindent{\it MCMC}

In order to overcome this drawback, an interesting idea is to turn to Markov Chain Monte Carlo (MCMC) techniques such as the Metropolis-Hastings class of sampling algorithms \cite{mcmc}. Starting from a configuration of phenotypes fulfilling the constraint ({\it e.g.} with the observed phenotypes), alternate the following two steps: 1) proposal move: select two individuals $i$ and $j$ such that $Y_i=1$ and $Y_j=0$ and then exchange their phenotypes ($Y_i=0$ and $Y_j=1$); 2) accept this move with rate $\alpha$ given by:
\begin{equation}
\alpha=\frac{(1-\pi_i)\pi_j}{\pi_i(1-\pi_j)}. \nonumber
\end{equation}
It can be shown that the sequence of phenotype configurations generated by this algorithm is a Markov chain whose stationary distribution is our targeted constrained distribution H1. In practice, the first configurations which are generated must be discarded since the Markov chain is not yet stationary (this stage is called \emph{burn-in}). In order to obtain a set of independent configurations under H1, one can either repeat the burn-in phase of the algorithm as many times as necessary, or use a sequence of pseudo-independent samples by picking phenotype configurations after the burn-in once every fixed number of iterations.

The main advantage of the MCMC approach is that it will eventually work regardless of the probability of event $\mathcal{C}$. Its drawback is that the number of iterations for burn-in and pseudo-independence necessary for good mixing increases at least linearly with the total number $n$ of individuals. Moreover, the control for stationarity and pseudo-independence is delicate and the practical choice of burn-in and independence thresholds usually requires tedious calibration work.

\vspace{1em}
\noindent{\it Backward sampling}

Our proposed alternative is to turn to exact sampling by introducing the backward quantities $B_i(m)=\P(\mathcal{C} | Z_i=m)$ which can be computed recursively like the forward quantities defined above.

Starting from $B_n(m)$ being equal to $0$, except for $B_n(n_1)=1$, we get the following recursive formula for $1 \leqslant i \leqslant n$:
\begin{equation}\label{eq:form_back}
B_{i-1}(m) = \pi_i B_{i}(m+1)+(1-\pi_{i})B_{i}(m).
\end{equation}
One should note that we obtain $\P(\mathcal{C})=B_0(0)$ in the process.

Then we can sample phenotypes $Y_i$ sequentially with the following probability:
\begin{equation}\label{eq:sampling}
\P(Y_i=1| Z_{i-1}=m, \mathcal{C})=\frac{\pi_i B_i(m+1)}{B_{i-1}(m)}.
\end{equation}
Since this expression depends on $Z_{i-1}$, the corresponding probability obviously depends on the number of cases seen by $i-1$. In mathematical terms, the sequence $(Y_i,Z_i)_{i=1\ldots n}$ is a \emph{heterogeneous} Markov chain.


\vspace{1em}
\noindent{\it H0 simulations}

H0 simulations are simply performed by permutating the phenotypes $Y_i$s. A simple way to do this is to uniformly choose a permutation $\sigma$ of $\{1,\ldots,n\}$ by performing a $O(n \log n)$ quicksort algorithm on a sample of $n$ independent random variables with a uniform distribution on $[0,1]$. For example, this can be achieved through the {\tt sample} command in {\tt R} \cite{R}. Alternatively, one may use \waffect\ with the same probability $\pi_i$ for all individuals.

\vspace{1em}
\noindent{\it Toy-example dataset}

For validation purposes, we first considered a toy-example dataset with $p=1$ SNP. The $n$ genotypes $X_i$ have the following distribution: 80\% have genotype 0, 15\% genotype 1, and 5\% genotype 2. With $n$ being a multiple of 20, these counts are always integers. For our disease model, we considered the following relative risks: $\text{RR}_1=1.5$ and $\text{RR}_2=2.0$ (additive effect of $0.5$). We hence obtained the following penetrances: $f_0$, $f_1=1.5 f_0$ and $f_2=2.0 f_0$. In order to explore the relevance and limits of each of our three approaches, we considered a wide range of values for $n$ and $f_0$.

\vspace{1em}
\noindent{\it Hapgen simulations}

\begin{table}
   $$
 \begin{array}{| c | c | c | c | c | c |}
		\hline
		n & f_0 & \P(\mathcal{C}) & \text{Rejection}  & \text{MCMC} & \text{Backward} \\
		\hline
		\hline
		 20 & 0.2 & 4.5\cdot 10^{-3} & 0.4\text{ s} & 38\text{ s} &  0.05\text{ s} \\
		\hline
		20  & 0.1 & 1.7 \cdot 10^{-5 }  & 1.5\text{ min}  & 38\text{ s} & 0.05\text{ s} \\
		\hline
		20 & 0.07 & 6.7 \cdot 10^{-7} & 38.5\text{ min} & 38\text{ s} & 0.05\text{ s} \\ 
		\hline
		20  & 0.05 & 2.9 \cdot 10^{-8} &  11.2\text{ h} & 38\text{ s} & 0.1\text{ s}\\  
		\hline
		40  & 0.2 &  8.2\cdot 10^{-5} & 17.4\text{ s}  & 1.2\text{ min} & 0.1\text{ s} \\
		\hline
		100 & 0.2 & 8.7\cdot 10^{-10} & \text{NA} & 3.2\text{ min} & 0.2\text{ s} \\ 
		\hline
		100 & 0.1 & 5.8\cdot 10^{-22} & \text{NA}  & 3.2\text{ min} & 0.2\text{ s} \\
		\hline
		100 & 0.01 & 1.1\cdot 10^{-69} & \text{NA} & 3.2\text{ min} & 0.2\text{ s} \\
		\hline
		\end{array}   
$$
   \caption{Running time for generating 100 replicates.}
   \label{tab_time}
 \end{table}


In contrast with simulations under H1, simulations under H0 are expected to provide few statistically significant results. This contrast can be observed through a binary (or two-class) classifier system. Regarding this aspect, we compared the behavior of \waffect\ with the one of the widely used simulator Hapgen \cite{marchini07} under various conditions. The comparison relied on the so-called Receiver Operating Characteristic (ROC) curve, which is a graphical plot of the true positive classification rate (sensitivity) as a function of the false positive rate (specificity), for a continuum of values of a given discrimination criterion. The performance of each simulator is summarized using the area under the ROC curve (AUC): the larger the AUC, the more performant the simulator in generating H1 data contrasting with H0 data (more details about ROC curves can be found below).

Three input parameters must be specified in Hapgen: i) the range of the Minor Allele Frequency (MAF) of the simulated susceptibility SNP; here we considered $[0.2,0.3]$ or $[0.1,0.2]$; ii) the disease prevalence, which we set to $0.01$; iii) and the severity of the disease expressed through the relative risks $\text{RR}_1=f_1/f_0$ for individuals heterozygous at the disease SNP, and $\text{RR}_2=f_2/f_0$ for homozygous individuals (rare allele). We considered a total of $n=629$ subjects ($n_0=315$ controls, $n_1=314$ cases) and $p=9,579$ SNPs in the region delimited by loci 558,390 and 13,930,000 on chromosome 1 (Reference: HapMap files \cite{Hapmap}, CEU population reference, \url{http://hapmap.ncbi.nlm.nih.gov/}). 

For each condition, the penetrances  $f_0$, $f_1$ and $f_2$ used with \waffect\ for comparison purposes were obtained performing several simulations using Hapgen and then averaging the corresponding penetrances. Four datasets were generated. Each was then submitted to a genome-wide association study. In the first dataset (Hapgen - H1 hypothesis), Hapgen provided the genotypes together with the phenotypes. All the other datasets consisted in the aforesaid genotypic data together with phenotypes generated as follows. In the second dataset (Hapgen - H0 hypothesis), the phenotypes under the H0 hypothesis were generated through the method of permutations. In the third dataset (\waffect\ - H1 hypothesis) the phenotypes were affected with \waffect\ . Finally, in the fourth dataset (\waffect\ - H0 hypothesis), we obtained the replicates by affecting the phenotypes after specifying in \waffect\ a uniform probability $\pi_i$ accross all individuals (this is equivalent to permuting the phenotypes). Besides, 1,000 replicates were simulated for each of the four datasets and under each condition (MAF range and relative risks). In the end, 64,000 GWA studies  were performed ($2$ MAF ranges $\times$ $8$ ($\text{RR}_1$, $\text{RR}_2$) pairs $\times$ $4,000$ datasets). Moreover, the comparison of the AUCs obtained for Hapgen and \waffect\ were performed twice, because we used alternatively two distinct statistics of association as the discrimination criterion for the two-class classifier.



\vspace{1em}
\noindent{\it 1000 Genomes dataset}

The dataset consists of the genotypes of $n=629$ individuals ($n_0=315$ controls, $n_1=314$ cases) from the \textit{1000 Genomes Project} \cite{1000gen}. We focused on the first 100,000 SNPs on chromosome X. In the pretreatment stage, we filtered out all the SNPs with Minor Allele Frequency (MAF) less than or equal to 5\%, ending up with $n=8,048$ SNPs. 

We arbitrarily considered a disease model with two interacting disease SNPs. The two SNPs respectively have the base-pair positions 627,641 and 1,986,325, MAFs $0.26$ and $0.23$, and display no linkage disequilibrium ($r^2=0.02$). We considered an additive effect $\beta>0$ and an epistatic effect $\eta$ such that:
\begin{equation}
\pi_i = f_0 \times \left\{
\begin{array}{ll}
1.0 + \beta X_i^1 & \text{if $X_i^2=0$} \\
1.0 + \beta X_i^2 & \text{if $X_i^1=0$} \\
1.0 + \eta + \beta (X_i^1 + X_i^2) & \text{if $X_i^1X_i^2 \neq 0$} \\
\end{array}
\right. \nonumber
\end{equation}
where $X_i^1$ and $X_i^2$ correspond to the genotype of individual $i$ at the two susceptibility SNPs.


%
%

\begin{table}
   $$
 \begin{array}{| c | c | c | c | c |}
		\hline
		n & f_0  & \text{Rejection} & \text{MCMC} & \text{Backward} \\
		\hline
		\hline
		20 & 0.2 & 0.60 & 0.59 & 0.61 \\
		 &  & [0.53, 0.67] & [0.52, 0.66] &  [0.54, 0.68] \\
		\hline
		20 & 0.1 & 0.59 & 0.59 & 0.58 \\
		& &[0.52, 0.66]  & [0.52, 0.66] & [0.51, 0.65]   \\
		\hline
		20 & 0.07 & 0.62 & 0.56 & 0.56 \\
		& &  [0.55, 0.69] & [0.49, 0.63] & [0.49, 0.63] \\
		\hline
		20 & 0.05 & 0.44 & 0.58 & 0.53 \\
		& & [0.37, 0.51]  & [0.51, 0.65] & [0.47, 0.60] \\
		\hline 
		40 & 0.2 & 0.58 & 0.62 & 0.59 \\
		& & [0.50, 0.65]   & [0.54, 0.69] &  [0.52, 0.67]  \\
		\hline 
		100 & 0.2 & \text{NA} & 0.75 & 0.72 \\
		& & & [0.68, 0.81] & [0.65, 0.79] \\
		\hline 
		100 & 0.1 & \text{NA} & 0.64 & 0.68 \\
		& & & [0.57, 0.72] & [0.61, 0.75] \\
		\hline 
		100 & 0.01 & \text{NA} & 0.65 & 0.59 \\
		& & & [0.58, 0.73] & [0.51, 0.67] \\
		\hline
		\end{array}   
$$
   \caption{AUC and 95\% confidence intervals, 100 replicates.}
   \label{tab_auc}
 \end{table}

\vspace{1em}
\noindent{\it Statistics of association}

The association signal of each SNP (single marker analysis) is computed through the PLINK software \cite{plink} using the trend statistic. We denote $p_j$ the p-value of SNP $j$. In order to avoid the multi-testing issue, we considered the following  single real valued statistic: 
\begin{equation}
S_{\rho}=\max_{j \in \mathcal{J}_{\rho} } - \log_{10} p_{j} \nonumber
\end{equation}
where $\mathcal{J}_{\rho}$ denotes the subset of SNPs such that the distance between their loci and the locus of a disease SNP is less than a radius $\rho$ 

For example, $S_\infty$ corresponds to taking the best p-value $p_j$ in negative (decimal) log-scale on the whole dataset. With $\rho=5$ kb, we consider the best p-value of the SNPs within $5$ kb of one of the two susceptibility SNPs. The radius part of the statistic is somewhat unusual since it takes advantage of information which is rarely available in practice: the location of the disease SNPs. However, in our simulation framework, this information helped us discuss the power of the design depending on the {\it a priori} location of the susceptibility SNPs. From a general point of view, starting with no {\it a priori} leads to a choice of  $\rho=\infty$; after a successful linkage analysis, $\rho=100$ kb might be more relevant, and if a candidate gene has been identified, $\rho=5$ kb could be a good choice.

\vspace{1em}
\noindent{\it ROC curves and AUC}

In order to measure the performance of a given statistic for a given design and H1 model, it is natural to compare the sensitivity and specificity of the testing framework for a given threshold of significance. For example, a global threshold of $5\%$ is commonly accepted as a reasonable choice. It might be interesting to avoid doing this arbitrary choice by studying instead the Receiver Operator Characteristic (ROC) curve which is nothing but a graphical representation of the specificities and sensitivities that we can obtain for all possible values of the threshold of significance \cite{Metz}.

\begin{figure}[t]
\centering
\includegraphics[width=0.5\textwidth]{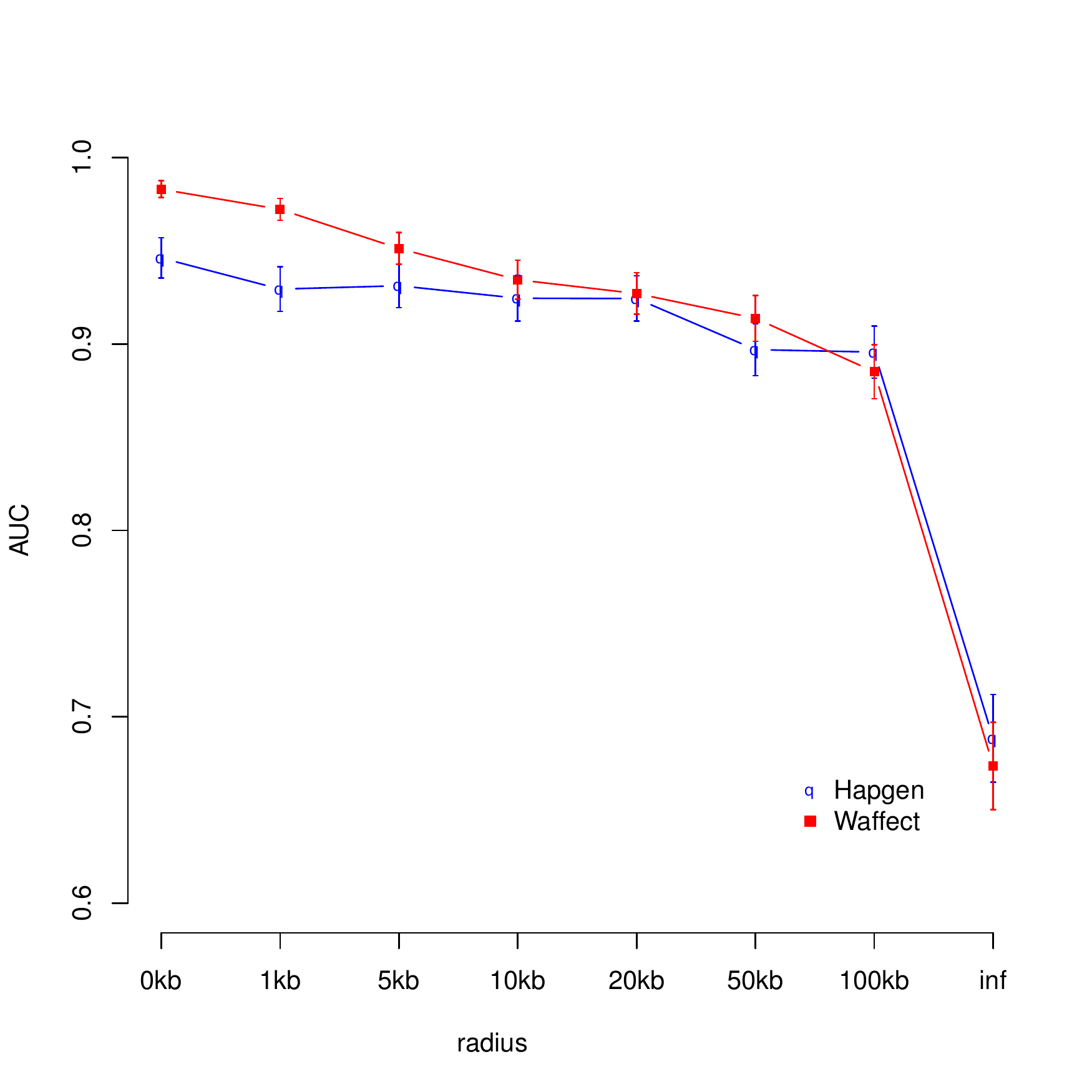}
\vspace{-5mm}
\caption{Comparison of the area under the curves (AUCs) for Hapgen and Waffect - 1,000 replicates under H1 and 1,000 replicates under H0, respectively, for MAF in $[0.2, 0.3]$, additive model, $\text{RR}_1 = 1.6$, $\text{RR}_2 = 2.2$. The 95\% confidence intervals are indicated.}
\label{fig:hapgen_vs_waffect}
\end{figure}

The ROC curve itself can then be further summarized by the Area Under the Curve (AUC) which can be qualitatively interpreted as follows: $\text{AUC} \leqslant 0.6$ means ``fail''; $0.6 < \text{AUC} \leqslant 0.70$ means ``poor''; $0.7 < \text{AUC} \leqslant 0.80$ means ``fair''; $0.8 < \text{AUC} \leqslant 0.9$ means ``good''; $0.9 < \text{AUC} \leqslant 1.0$ means ``excellent''. 

The AUC can easily be estimated from two samples of the statistic: one sample under H0, and the other under H1. All ROC curves and AUC computations (including confidence intervals) were performed using the {\tt R} package {\tt pROC} \cite{pROC}.

\subsubsection*{Results}

\vspace{1em}
\noindent{\it Validation: toy-example dataset} 


In Table \ref{tab_time} are depicted the execution times for generating 100 replicates through the rejection, MCMC and backward sampling algorithms on the toy-example dataset, together with the corresponding probabilities $\P(\mathcal{C})$ of randomly drawing a complete configuration of phenotypes with exactly $n_1$ cases. Different values of $n$ and $f_0$ are considered. $\P(\mathcal{C})$ dramatically decreases when $n$ grows and $f_0$ becomes smaller. As a consequence, the running time of the rejection algorithm is already important (more than 10 hours) even for $n$ as small as $40$ and $f_0$ as large as $0.2$. The rejection sampling algorithm was not run for greater values of $n$ as the corresponding probability of a successful finding is negligible. The MCMC and backward sampling algorithms have short running times which grow  with $n$ (very gently for the backward algorithm), and, most importantly, which do not depend on $\P(\mathcal{C})$. For the MCMC algorithm, we considered a burn-in of $10^5\times n$.

The Area Under the Curves and the corresponding 95\% confidence intervals computed on the toy-example dataset replicates are shown in Table~\ref{tab_auc}. The values found with the three methods are consistent.


\begin{figure}
   \begin{center}
     \setlength{\unitlength}{5mm}
     \includegraphics[scale = 0.5]{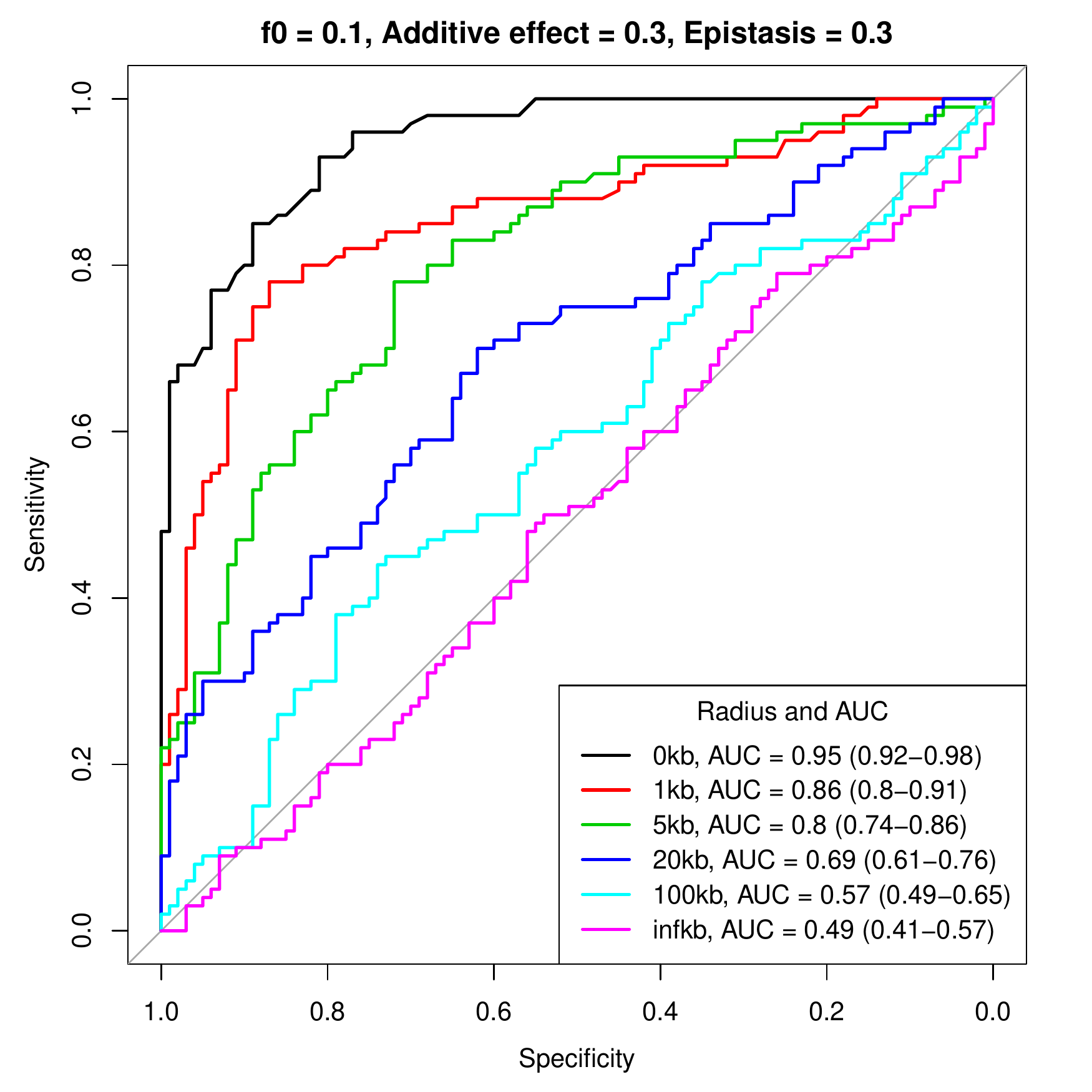}
 \end{center}
   \caption{ROC curves and AUCs at different values $\rho$ of the radius of the intervals centered in the two disease SNPs.}
   \label{fig_reg}
 \end{figure}

\vspace{1em}
\noindent{\it Validation: Hapgen simulations}

First, it has to be noted that even if the number of SNPs specified in Hapgen is constant across the replicates, the observed number of SNPs varies across these replicates. Moreover, the disease SNP's MAF and the relative risks observed in the simulated data fluctuate from one replicate to another (see the Supplementary Material for explanation). 

Figure~\ref{fig:hapgen_vs_waffect}
illustrates the case when the MAF range is $[0.2, 0.3]$ and the genetic model is additive with $\text{RR}_1$ and $\text{RR}_2$ values respectively set to $1.6$ and $2.2$. Despite the fact that Hapgen and our approach are not based on the same exact model, we can see a very good correlation between the two approaches, suggesting that they give similar results.

In our study (629 subjects, 9,579 SNPs specified in Hapgen, Power Edge R900 XEON X7460 2.6 GHz, RAM 128 GB), the generation and processing of 1,000 replicates required 75 minutes or so using the backward sampling algorithm (under H1 or H0) or Hapgen under H0. In contrast, under H1 with Hapgen, 110 hours or so were necessary to generate and process data with the same dimension.

More comprehensive comparisons with Hapgen are available in the Supplementary Material. 


\vspace{1em}
\noindent{\it Application: Power study on the 1000 Genomes dataset}

The ROC curves of the statistics $S_{\rho}$ on the \textit{1000 Genomes Project} dataset for different values of the radius $\rho$ are displayed in Figure~\ref{fig_reg}. The AUC is very low ($95\%$ CI = $[0.41, 0.57]$) when considering all SNPs ($\rho = \infty$). The AUC increases as the region around the two susceptibility SNPs being investigated gets smaller. In particular, the performance is good for $\rho \leqslant 5$ kb. This roughly corresponds to the size of a gene. These results show that given the design and the disease model under consideration, the statistic which consists in simply taking the best p-value is successful in detecting a positive signal only in presence of a biological {\it a priori}, thereby narrowing the investigation to the candidate gene level. 

 %

 %
 
We fixed a candidate region corresponding to the radius $\rho = 5$ kb and investigated the dependency of the AUC on the following parameters: epistasis $\eta$, additive effect $\beta$, probability $f_0$ of being a case in absence of the rare alleles, and total number $n$ of individuals.

 \begin{table}
   \begin{center}
 \begin{tabular}{| c |  c |}
		\hline
		$\beta$ &  AUC $[95\%$ \mbox{ CI}] \\
		\hline
		\hline
		$0.1$ & $0.62$ $[0.55,  0.70]$ \\
		\hline
		$0.2$ & $0.72$ $[0.64, 0.79]$ \\
		\hline
		$0.3$ & $0.80$ $[0.74, 0.86]$ \\
		\hline
		$0.4$ & $0.90$ $[0.86, 0.94]$ \\
		\hline
		\end{tabular}   
 \end{center}
   \caption{AUCs at different values $\beta$ of additive effect. Fixed values: $f_0=0.1$, epistasis $\beta=0.3$, radius of the candidate region $\rho=5$ kb. }
   \label{tab_ad}
 \end{table}

\begin{table}
   \begin{center}
 \begin{tabular}{| c |  c |}
		\hline
		$\eta$ & AUC $[95\%$ \mbox{ CI}] \\
		\hline
		\hline
		$0$ & $0.74$ $[0.67,  0.81]$ \\
		\hline
		$0.3$ & $0.80$ $[0.74, 0.86]$ \\
		\hline
		$0.6$ & $0.85$ $[0.80, 0.91]$ \\
		\hline
		$0.9$ & $0.88$ $[0.83, 0.93]$ \\
		\hline
		\end{tabular}   
 \end{center}
   \caption{AUCs at different values $\eta$ of epistasis. Fixed values: $f_0 = 0.1$, additive effect $\beta=0.3$, radius of the candidate region $\rho=5$ kb. }
   \label{tab_ep}
 \end{table}

Not unexpectedly, the AUC grows with the additive effect $\beta$ (Table~\ref{tab_ad}). More surprisingly, it also grows when the epistatic effect increases (Table~\ref{tab_ep}) despite the fact that our statistic $S_\rho$ only uses simple marker statistics. 
 
 %

 %
 
We then focused on the design of the GWA study itself by studying the effect of the total number of individuals $n$ on the AUC. This was simply done by doubling (tripling, quadrupling) the original dataset before applying our disease model. In the case of our configuration of reference $\eta = 0.3, \beta = 0.3, f_0=0.1$ and $\rho=5$ kb, we observed that the performance is excellent when the population is twice as large as the original one ($\text{AUC} = 0.94\ [0.90,0.97]$ for $n=1258$, $\text{AUC} = 0.80\ [0.74,0.86]$ for $n=629$). This gain in power is even more dramatic when all the SNPs are taken into account ($\rho=\infty$): the AUC doubles in value when passing from $629$ to $1258$ individuals (see Table~\ref{tab_pops}).

Finally, we investigated the effect of $f_0$, the probability of being affected without any exposure, in the design of the GWA study (note that for small values, $f_0$ is close to the prevalence). We can see in Table~\ref{tab_f0} that small values of $f_0$ result in smaller AUCs ({\it i.e.} $\text{AUC}=0.73\ [0.66,0.80]$ for $f_0=0.001$), while large values of $f_0$ result in much better results ({\it e.g.} $\text{AUC}=0.94\ [0.90,0.97]$ for $f=0.25$).


\begin{table}
   \begin{center}
 \begin{tabular}{| c |  c |}
		\hline
		$n$  & AUC $[95\%$ \mbox{ CI}] \\
		\hline
		\hline
		$629$ & $0.49$ $[0.41,  0.57]$ \\
		\hline
		$1258$ & $0.78$ $[0.71, 0.84]$ \\
		\hline
		$1887$ & $0.92$ $[0.88, 0.96]$ \\
		\hline
		$2516$ & $0.93$ $[0.90, 0.97]$ \\
		\hline
		\end{tabular}   
 \end{center}
 \caption{AUCs for different sample sizes and on the whole dataset.  Fixed values: epistasis $\eta=0.3$, additive effect $\beta=0.3$, $f_0=0.1$.}
   \label{tab_pops}
 \end{table}
  
   \begin{table}
   \begin{center}
 \begin{tabular}{| c |  c |}
		\hline
		$f_0$ & AUC $[95\%$ \mbox{ CI}] \\
		\hline
		\hline
		$0.001$ & $0.73$ $[0.66,  0.80]$ \\
		\hline
		$0.01$ & $0.71$ $[0.64, 0.78]$ \\
		\hline
		$0.1$ & $0.80$ $[0.74, 0.86]$ \\
		\hline
		$0.25$ & $0.94$ $[0.90, 0.97]$ \\
		\hline
		\end{tabular}   
 \end{center}
 \caption{AUCs at different values of penetrance $f_0$. Fixed values: epistasis $\eta=0.3$, additive effect $\beta=0.3$, radius of the candidate region $\rho=5$ kb. }
   \label{tab_f0}
 \end{table}

\subsubsection*{Discussion}

\vspace{1em}
\noindent{\it Validation}

The three algorithms we suggested give similar results but their performances are quite different. The rejection algorithm has a complexity of $O( N / \mathbb P(\mathcal C))$ to obtain $N$ samples. This prevents its use in practical cases when $\mathbb{P}(\mathcal{C})$ can easily reach $10^{-50}$ or worse. The MCMC approach has the advantage of removing this dependence, and is simple to implement. The drawback is that the burn-in parameter requires some calibration work (we suggest to use $\text{burn-in}=10^5 n$). Moreover this heavy numerical method is rather slow. In our study, the backward sampling approach proved to be the fastest one, dramatically outperforming the MCMC alternative ({\it i.e.} 500 times faster). From a theoretical point of view, the backward sampling algorithm has a space complexity of $O(n_1 \times n)$, and a time complexity of $O(n_1 \times n + N \times n)$ where $N$ is the desired sample size. Our {\tt C} implementation makes it possible to generate a configuration of $n=20,000$ phenotypes with $n_1=10,000$ cases in $0.2$ seconds on a $1.86$ Gz workstation (Xeon E5320, 8 Go RAM, Linux 2.6.35).

\vspace{1em}
\noindent{\it Comparison with Hapgen}


Our comparison with Hapgen clearly demonstrates that our approach is a valid alternative to this software. It also sheds light on the major differences between the two approaches. First, in order to yield a realistic model for simulating genotypes, Hapgen requires additional information such as HapMap frequencies and recombination rates whereas our approach only needs the original genotypes. Regarding our simulations performed with Hapgen and \waffect\,, running the newest Hapgen version would have entailed no change in the results: the newest Hapgen version is an extension of the former; it is not an improvement (at least regarding single susceptibility SNP simulation). Indeed, in its initial version, Hapgen limits the disease model to only one susceptibility SNP. The most recent version can now simulate multiple disease SNPs, under the assumption that each disease SNP acts independently and that all such SNPs are in Hardy-Weinberg equilibrium, which are two strong constraints. Besides, this novel software, unfortunately only available in the 64 bit version, requires the use of an R package to simulate interactions between the disease SNPs. In contrast, simulating any disease model is straightforward with our approach, as long as it results in an individual probability for each individual to be affected. For example, we can consider two or more susceptibility SNPs, and easily add epistatic effects, Gene-Environment interactions, the contribution of rare variants, etc. Moreover, due to the multiple constraints it must take into account, Hapgen only respects the specified model on average. As a consequence, the effective MAF and relative risks can vary from one run to another. Besides, selecting for each Hapgen replicate a different disease SNP entails variations in the number of SNPs available for the association test (see the Supplementary Material for more details). Finally, since our approach only generates a fraction of the data (the phenotypes), it is outstandingly faster than Hapgen. Furthermore, since with our approach the genotype remains untouched, any numerically intensive analysis which is performed on the genotype table ({\it g.e.} LD computations) needs to be performed only once while with Hapgen it has to be done for each replicate.


\vspace{1em}
\noindent{\it Effects}

In our application study, we showed how our strategy can be used to investigate various effects on the performance of a GWA study. We first pointed out that with the suggested design, our modest disease model ($\beta=\eta=0.3$, $f_0=0.1$), and the analysis we chose (simple marker using PLINK), the signal is only statistically detectable by limiting the region to be investigated to a $5$ kb radius around the susceptibility SNPs. This clearly shows that the present design is practical only if previous work ({\it e.g.} linkage analysis) suggests candidate genes.

Then we proved that the AUC increases when either the additive effect ($\beta$) or the epistatic effect ($\eta$) increases. While the result for the additive effect is not surprising,  the result for the epistatic effect is worth commenting upon. Indeed, in our application, we deliberately stuck to a classical simple marker analysis (PLINK using the trend test) which is not designed to detect epistatic effects. However, our study showed that the corresponding marginal effects can improve the overall statistical performance of the analysis even when no specific effort for the detection of epistasis is performed. It would be interesting to do further research to compare simple marker analysis to more complex analysis especially designed to take into account epistatic effects. 

\vspace{1em}
\noindent{\it Design}

We finally briefly investigated the influence of the design of the GWA study on its performance. As could be expected, the total number $n$ of individuals plays a critical role. It might be interesting however to refine this analysis by also considering different ratios $n_1/n_0$. This is left for further investigation. We also showed that for a fixed design (fixed number of cases and controls) the prevalence of the disease (through $f_0$, which is closely related) has an unexpected effect on the performance: higher prevalence gives better results. More investigations might be necessary to understand the reason behind this observation.

\vspace{1em}
\noindent{\it Extensions}

One should note that our approach can be extended in several directions. First, we can easily consider more than two classes, thus complexifying constraint $\mathcal{C}$. In this case, the rejection and MCMC algorithms remain exactly the same. For backward sampling, one should first affect one class against the others, and then perform the affectation recursively on the remaining unaffected classes. The resulting complexity is $O(n\times (K-1))$ for the affectation of $n$ individuals into $K$ classes. As explained above, we can also consider sophisticated genetic models that take into account additional covariates such as environmental exposure or rare variants. Finally, the extension of our approach to other fields is quite straightforward. For example, in the analysis of gene expression, our approach might be used to affect sample status ({\it e.g.} cancer or healthy) conditional on the level of expression rather than generating expressions through parametric or non-parametric questionable models as is usually done.

\vspace{1em}
\noindent{\it Conclusions}

In this paper, we present an alternative to classical simulations under H1 in GWA studies. The idea is to generate the phenotypes conditional on the existing genotypes with respect to the study design (number of cases). For that purpose, we suggest three algorithms including the backward sampling which mimics Hidden Markov Models to provide a fast sampling conditional on the constraint. Our study shows that our algorithms are valid and that their results are consistent with reference software such as Hapgen. Moreover, our approach has many advantages: it is much faster; it does not need any genotyping model (genotypes remain untouched); it makes it possible to consider any complex genetic model (including several SNPs, epistasis, covariates, rare variants, etc.). Our approaches are available in an ${\tt R}$ package called \waffect\ (``double-u affect'', for weighted affectations). The beta version can be downloaded from \\ \url{www.math-info.univ-paris5.fr/~vperduca} .

\subsubsection*{Authors' contributions}


GN suggested the original idea and the backward sampling algorithm. GN and VP implemented the {\tt R} package. RM and CS wrote the state-of-the-art on the subject in the Introduction. CS took in charge the comparison with Hapgen and the editing of the corresponding supplementary material. VP performed all the other simulations and edited the Appendix. The editing of the paper was done by GN and VP. All authors read and approved the final manuscript. 

\subsubsection*{Appendix}

In this section we give the proofs of the mathematical results from the Material and Methods section, namely of equations (\ref{eq:forw}), (\ref{eq:form_back}) and  (\ref{eq:sampling}). Recall that $Z_j = \sum_{i=1}^j Y_i$ is the number of cases among individuals $1,\ldots, j$ and that we deal with constraint $\mathcal{C}=\{Z_n = n_1\}$. The forward and backward quantities are the probabilities $F_i(m)=\P(Z_i=m)$ and $B_i(m)=\P(\mathcal{C}|Z_i = m)$. Let $V$ be the set of variables $\{Y_1,\ldots,Y_n,Z_1,\ldots,Z_n\}$. 

The key property is that the conditional dependencies among all the variables determine the following factorization of the joint probability distribution
\begin{equation}
\label{eq:factor}
\P(V) = \prod_{j\in I}\P(Z_j|Z_{j-1},Y_j)\P(Y_j),
\end{equation}
where $I=\{1,\ldots,n\}$ is the set of all the individuals and $Z_0=0$ by convention.

\begin{theorem}
For each individual $i=1,\ldots,n$:
\begin{equation} 
\P(Z_i=m,\mathcal{C})  = F_i(m)B_i(m). \label{eq:separator}   
\end{equation}
Moreover 
\begin{multline}
\P(Y_i=1,Z_{i}=m+1,Z_{i-1}=m,\mathcal{C}) = \\
F_{i-1}(m)\pi_iB_i(m+1), \label{eq:clique1} 
\end{multline}
\begin{multline}
\P(Y_i=0,Z_i=m, Z_{i-1}=m,\mathcal{C})   =  \\
F_{i-1}(m)(1-\pi_i)B_i(m)  \label{eq:clique0}.
\end{multline}
\end{theorem}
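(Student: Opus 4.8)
All three identities follow from the product form (\ref{eq:factor}) together with the single fact that the $Y_i$ are mutually independent Bernoulli variables with $Z_j=Z_{j-1}+Y_j$. The plan is to establish the two ``clique'' identities (\ref{eq:clique1}) and (\ref{eq:clique0}) first, and then to recover the ``separator'' identity (\ref{eq:separator}) by summing them over the two possible values of $Y_i$ and invoking the forward recursion (\ref{eq:forw}).

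First I would prove (\ref{eq:clique1}). The event $\{Y_i=1,\ Z_{i-1}=m\}$ already forces $Z_i=m+1$, so the left-hand side equals $\P(Y_i=1,\ Z_{i-1}=m,\ \mathcal{C})$, which I factor by the chain rule as $\P(Y_i=1,\ Z_{i-1}=m)\cdot\P(\mathcal{C}\mid Y_i=1,\ Z_{i-1}=m)$. For the first factor, $Z_{i-1}$ is a deterministic function of $Y_1,\dots,Y_{i-1}$, hence independent of $Y_i$, so it equals $\P(Y_i=1)\,\P(Z_{i-1}=m)=\pi_i F_{i-1}(m)$. For the second factor I add the (implied) conditioning event $Z_i=m+1$ and use that $\mathcal{C}=\{Z_n=n_1\}$ with $Z_n=Z_i+\sum_{j>i}Y_j$; since $Y_{i+1},\dots,Y_n$ are independent of $(Y_1,\dots,Y_i)$, the constraint depends on the ``past'' only through $Z_i$, giving $\P(\mathcal{C}\mid Z_i=m+1)=B_i(m+1)$. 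Multiplying the two factors yields $F_{i-1}(m)\pi_i B_i(m+1)$. The proof of (\ref{eq:clique0}) is identical with $Y_i=0$, so that $Z_i=Z_{i-1}=m$ and the relative-risk factor becomes $1-\pi_i$.

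To deduce (\ref{eq:separator}) I partition $\{Z_i=m,\ \mathcal{C}\}$ according to $Y_i$: when $Z_i=m$, having $Y_i=1$ forces $Z_{i-1}=m-1$ and having $Y_i=0$ forces $Z_{i-1}=m$, and these two cases are disjoint and exhaustive. Applying (\ref{eq:clique1}) with $m$ replaced by $m-1$ and (\ref{eq:clique0}) gives $\P(Z_i=m,\mathcal{C})=F_{i-1}(m-1)\pi_i B_i(m)+F_{i-1}(m)(1-\pi_i)B_i(m)$, and factoring out $B_i(m)$ the bracket is exactly $F_i(m)$ by the forward recursion (\ref{eq:forw}). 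The boundary cases are routine: at $i=1$ one uses $Z_0=0$ and $F_0$ concentrated at $0$; at $i=n$ one uses $B_n(m)=\mathbf{1}[m=n_1]$, so that (\ref{eq:separator}) collapses to the definition of $\mathcal{C}$; and impossible indices (e.g. $m=0$ with $Y_i=1$) are handled by the convention that the corresponding $F$ values vanish.

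The only delicate points are the two conditional-independence claims used above: that $Y_i$ is independent of $Z_{i-1}$, and that, given $Z_i$, the constraint $\mathcal{C}$ is independent of $(Y_i,Z_{i-1})$ — equivalently of the whole prefix $Y_1,\dots,Y_i$. Both are immediate from the factorization (\ref{eq:factor}): fixing the value of $Z_i$ cuts the chain of factors into a part indexed by $j\le i$ and a part indexed by $j>i$, and $\mathcal{C}$ lives entirely in the second part. Once these are spelled out, everything else is bookkeeping; in particular, dividing (\ref{eq:clique1}) by (\ref{eq:separator}) evaluated at $i-1$ immediately produces the sampling formula (\ref{eq:sampling}).
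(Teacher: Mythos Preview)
Your proof is correct, and it reaches the same conclusions as the paper, but the route is genuinely different. The paper proves (\ref{eq:separator}) first and directly: it writes $\P(Z_i=m,\mathcal{C})$ as a sum over all the remaining variables of $\mathbf{1}_{\mathcal{C}}\prod_j \mathbb{Q}_j$, then partitions the variable set into $V_{1:i}$ and $V_{(i+1):n}$ and observes that, because $Z_i$ is the only variable shared between the two groups of factors, the sum splits as a product, whose two pieces are recognised as $F_i(m)$ and $B_i(m)$. The clique identities (\ref{eq:clique1})--(\ref{eq:clique0}) are then declared to follow by the same sum-product manipulation, and the forward/backward recursions are derived afterwards in a corollary.

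You go in the opposite order: you establish (\ref{eq:clique1}) and (\ref{eq:clique0}) first via the chain rule and two transparent independence facts (that $Y_i\perp Z_{i-1}$, and that $\mathcal{C}\perp (Y_i,Z_{i-1})\mid Z_i$), and then recover (\ref{eq:separator}) by summing over $Y_i$ and invoking the forward recursion (\ref{eq:forw}). Your argument is more elementary and closer in spirit to a Markov-chain computation, and it has the pedagogical advantage of making the two conditional-independence claims explicit rather than hiding them in a distributive-law step. The paper's argument, on the other hand, is the canonical junction-tree/sum-product calculation, which is self-contained in the sense that it does not presuppose (\ref{eq:forw}) and would generalise verbatim to more elaborate factor graphs. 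Either approach is perfectly adequate here.
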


\begin{proof}
We will only prove Eq. (\ref{eq:separator}): similar arguments hold for equations (\ref{eq:clique1}) and (\ref{eq:clique0}). The marginal probability $\P(Z_i=m,\mathcal{C})$ is obtained from the joint probability distribution $\P(V)$ multiplying it by the indicator function ${\bf 1}_\mathcal{C}$ (by definition ${\bf 1}_{\mathcal{C}}$ = 1 if and only if $\mathcal{C}$ is true) and summing out all the variables in $V\setminus\{Z_i\}=\{Z_1,\ldots,Z_{i-1},Z_{i+1},\ldots,Z_{n},Y_1,\ldots,Y_n\}$. Because of Eq. (\ref{eq:factor}), we have


\begin{equation}
 \label{eq:proof_sep_part}
\P(Z_i=m,\mathcal{C}) = \sum_{V\setminus\{Z_i\}}  {\bf 1}_{\mathcal{C}} \prod_{j\in I} \mathbb{Q}_j(Z_j,Z_{j-1},Y_j),
\end{equation}
where for convenience we denote $\mathbb{Q}_j(Z_j,Z_{j-1},Y_j)=\P(Z_j|Z_{j-1},Y_j)\P(Y_j)$.

Now consider the sets
\begin{align*}
V_{1:i} & :=\{Z_1,Y_1,\ldots,Z_i,Y_i\},  \\
V_{(i+1):n} & :=\{Z_{i+1},Y_{i+1},\ldots,Z_{n},Y_n\}.
\end{align*}
$V_{1:i}\cup V_{(i+1):n}$ is a partition of $V$ and 
$V\setminus\{Z_i\}=V_{1:j}-\{Z_i\}\cup V_{(i+1):n}$.

Note that Eq. (\ref{eq:proof_sep_part}) is equal to 

\begin{align*}
& \sum_{V_{1:i}\setminus\{Z_i\}} \prod_{1\leq j \leq i}\mathbb{Q}_j\left(\sum_{V_{(i+1):n}} {\bf 1}_{\mathcal{C}}\prod_{i+1 \leq j \leq n} \mathbb{Q}_j \right) = \\
&  \left(\sum_{V_{1:i}\setminus\{Z_i\}} \prod_{1\leq j \leq i}\mathbb{Q}_j\right)\cdot\left(\sum_{V_{(i+1):n}} {\bf 1}_{\mathcal{C}}\prod_{i+1 \leq j \leq n} \mathbb{Q}_j\right),
\end{align*}
where the last equality holds because the only variable shared between the two factors is $Z_i$ which is set to be $Z_i=m$ from the beginning. It is now easy to see that the two factors in the product above are $F_i(m)$ and $B_i(m)$ respectively.
\end{proof}

\begin{corollary}
The equations (\ref{eq:clique1}) and (\ref{eq:clique0}) make it possible to compute the forward and backward quantities recursively:
\begin{align*}
F_{i}(m) & =F_{i-1}(m-1)\pi_i + F_{i-1}(m)(1-\pi_i), \\
B_{i-1}(m) & = \pi_i B_{i}(m+1)+(1-\pi_{i})B_{i}(m).
\end{align*}
\end{corollary}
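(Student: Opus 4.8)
The plan is to obtain each of the two recursions by conditioning on the single variable $Y_i$ and using the chain structure encoded in the factorization (\ref{eq:factor}), together with the deterministic identity $Z_i = Z_{i-1}+Y_i$. Two elementary consequences of (\ref{eq:factor}) will be used repeatedly: first, the $Y_j$ are independent, so $Y_i$ is independent of $Z_{i-1}=Y_1+\cdots+Y_{i-1}$ and $\P(Y_i=1\mid Z_{i-1}=m)=\pi_i$; second, $\mathcal{C}=\{Z_n=n_1\}$ rewrites as $\{Y_{i+1}+\cdots+Y_n=n_1-Z_i\}$, hence conditionally on $Z_i$ the event $\mathcal{C}$ is independent of $Y_1,\ldots,Y_i$ (in particular of $Z_{i-1}$ and $Y_i$ once $Z_i$ is fixed). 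This is exactly the heterogeneous Markov property already noted after (\ref{eq:sampling}).

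For the backward recursion I would start from the definition $B_{i-1}(m)=\P(\mathcal{C}\mid Z_{i-1}=m)$ and split the conditioning according to whether $Y_i=1$ or $Y_i=0$:
\begin{align*}
B_{i-1}(m) &= \P(Y_i=1\mid Z_{i-1}=m)\,\P(\mathcal{C}\mid Z_{i-1}=m,Y_i=1) \\
&\quad + \P(Y_i=0\mid Z_{i-1}=m)\,\P(\mathcal{C}\mid Z_{i-1}=m,Y_i=0).
\end{align*}
The first factors are $\pi_i$ and $1-\pi_i$ by independence. In the first term $Z_{i-1}=m$ and $Y_i=1$ force $Z_i=m+1$, so by the Markov property $\P(\mathcal{C}\mid Z_{i-1}=m,Y_i=1)=\P(\mathcal{C}\mid Z_i=m+1)=B_i(m+1)$, and similarly the second conditional probability is $B_i(m)$; this gives $B_{i-1}(m)=\pi_i B_i(m+1)+(1-\pi_i)B_i(m)$. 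Staying closer to the wording of the corollary, one can instead add equations (\ref{eq:clique1}) and (\ref{eq:clique0}): since $\{Z_{i-1}=m\}$ splits into $\{Z_{i-1}=m,Y_i=1,Z_i=m+1\}$ and $\{Z_{i-1}=m,Y_i=0,Z_i=m\}$, the left-hand sides sum to $\P(Z_{i-1}=m,\mathcal{C})$, which equals $F_{i-1}(m)B_{i-1}(m)$ by (\ref{eq:separator}) at index $i-1$, while the right-hand sides sum to $F_{i-1}(m)\bigl(\pi_i B_i(m+1)+(1-\pi_i)B_i(m)\bigr)$; cancelling $F_{i-1}(m)$ yields the claim.

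For the forward recursion I would run the same argument without the constraint: $F_i(m)=\P(Z_i=m)=\P(Z_i=m,Y_i=1)+\P(Z_i=m,Y_i=0)$, and because $\{Z_i=m,Y_i=1\}=\{Z_{i-1}=m-1,Y_i=1\}$ and $\{Z_i=m,Y_i=0\}=\{Z_{i-1}=m,Y_i=0\}$, independence of $Y_i$ and $Z_{i-1}$ gives $F_i(m)=\pi_i F_{i-1}(m-1)+(1-\pi_i)F_{i-1}(m)$. This is precisely the $\mathcal{C}$-free twin of the marginalisation performed in the proof of the Theorem. Combined with the boundary values $F_0(\cdot)$ and $B_n(\cdot)$ fixed in the Materials and Methods section, the two recursions then determine every $F_i(m)$ and $B_i(m)$.

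I do not expect a real obstacle; the content is bookkeeping. The only points that deserve a line of care are: (i) the cancellation of $F_{i-1}(m)$ in the second derivation of the backward step is illegitimate when $F_{i-1}(m)=0$, since $B_{i-1}(m)$ is then a conditional probability given a null event and is pinned down only by convention — so one either restricts to reachable states or checks consistency with the chosen convention (the direct Markov-property argument avoids the division entirely and is the cleaner route); and (ii) the independence $Y_i\perp Z_{i-1}$ and the reduction $\P(\mathcal{C}\mid Z_{i-1},Y_i)=B_i(\cdot)$ both need to be read off from (\ref{eq:factor}) by the same partition-of-variables computation used for (\ref{eq:separator}), rather than merely asserted.
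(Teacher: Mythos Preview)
Your proposal is correct, and your second derivation of the backward recursion --- adding (\ref{eq:clique1}) and (\ref{eq:clique0}) so that the left-hand sides collapse to $\P(Z_{i-1}=m,\mathcal{C})=F_{i-1}(m)B_{i-1}(m)$ via (\ref{eq:separator}), then cancelling $F_{i-1}(m)$ --- is exactly the paper's argument (the paper writes it at index $i$ rather than $i-1$, and leaves the forward case to ``a similar argument''). Your first route, conditioning $B_{i-1}(m)$ directly on $Y_i$ and invoking the Markov property to identify $\P(\mathcal{C}\mid Z_{i-1}=m,Y_i)$ with $B_i(\cdot)$, is a slightly different and cleaner alternative: it avoids the division by $F_{i-1}(m)$ and hence the $F_{i-1}(m)=0$ caveat you correctly flag, which the paper does not mention.
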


\begin{proof}
We will prove only the recursive formula for $B_i$, a similar argument holds for $F_i$.  Note that 
\begin{multline}
\P(Z_i=m,\mathcal{C}) = F_i(m)\pi_{i+1}B_{i+1}(m+1)  \\ 
+ F_i(m)(1-\pi_{i+1})B_{i+1}(m) = F_i(m)B_i(m) \nonumber
\end{multline}

By dividing by $F_i(m)$, we obtain the iterative formula for the backward quantities. 
%
%
%
\end{proof}

\begin{corollary}
We can now prove Eq. (\ref{eq:sampling}): 
$$
\P(Y_i=1|Z_{i-1}=m,\mathcal{C})=\frac{\pi_iB_i(m+1)}{B_{i-1}(m)},
$$
for each individual $i=1,\ldots,n$.
\end{corollary}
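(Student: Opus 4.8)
The plan is to unfold the conditional probability via its definition and then substitute the two identities already established in the theorem. Write
$$
\P(Y_i=1\mid Z_{i-1}=m,\mathcal{C})=\frac{\P(Y_i=1,\,Z_{i-1}=m,\,\mathcal{C})}{\P(Z_{i-1}=m,\,\mathcal{C})},
$$
which is legitimate as soon as the conditioning event has positive probability, i.e. $F_{i-1}(m)B_{i-1}(m)=\P(Z_{i-1}=m,\mathcal{C})>0$ by Eq. (\ref{eq:separator}); the degenerate case can be dismissed in one line.

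Next I would treat the numerator. The key observation is the deterministic relation $Z_i=Z_{i-1}+Y_i$, so the event $\{Y_i=1,\,Z_{i-1}=m\}$ coincides with $\{Y_i=1,\,Z_i=m+1,\,Z_{i-1}=m\}$. Hence the numerator equals $\P(Y_i=1,\,Z_i=m+1,\,Z_{i-1}=m,\,\mathcal{C})$, which is exactly the left-hand side of Eq. (\ref{eq:clique1}) and therefore equals $F_{i-1}(m)\,\pi_i\,B_i(m+1)$.

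For the denominator I would simply invoke Eq. (\ref{eq:separator}) at index $i-1$, giving $\P(Z_{i-1}=m,\mathcal{C})=F_{i-1}(m)B_{i-1}(m)$. Dividing, the common factor $F_{i-1}(m)$ cancels and one is left with $\pi_iB_i(m+1)/B_{i-1}(m)$, as claimed. There is essentially no hard step here: the whole argument is bookkeeping on top of the theorem, and the only thing worth stating carefully is why $\{Y_i=1,Z_{i-1}=m\}=\{Y_i=1,Z_i=m+1,Z_{i-1}=m\}$ so that Eq. (\ref{eq:clique1}) applies verbatim, plus the harmless nondegeneracy caveat. For completeness I might remark that the analogous computation with Eq. (\ref{eq:clique0}) gives $\P(Y_i=0\mid Z_{i-1}=m,\mathcal{C})=(1-\pi_i)B_i(m)/B_{i-1}(m)$, and that the two probabilities sum to $1$ precisely by the backward recursion of the previous corollary, which serves as a consistency check.
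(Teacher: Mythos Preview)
Your proposal is correct and follows exactly the paper's approach: apply the definition of conditional probability and divide Eq.~(\ref{eq:clique1}) by Eq.~(\ref{eq:separator}) (at index $i-1$), cancelling the common $F_{i-1}(m)$. You simply spell out more carefully why the numerator matches Eq.~(\ref{eq:clique1}) via $Z_i=Z_{i-1}+Y_i$ and add a consistency check, but the substance is identical.
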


\begin{proof}
Apply the definition of conditional probability and divide Eq. (\ref{eq:clique1}) by Eq. (\ref{eq:separator}).
\end{proof}

\begin{corollary}
\begin{align*}
\P(Y_i = 1 | \mathcal{C}) & \propto   \sum_m F_i(m)\pi_iB_i(m+1),\\
\P(Y_1 = 0 | \mathcal{C}) & \propto \sum_m F_{i-1}(m)(1-\pi_i)B_i(m).
\end{align*}
\end{corollary}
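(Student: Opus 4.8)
The plan is to obtain both identities from a single application of the law of total probability, feeding in the ``clique'' identities (\ref{eq:clique1}) and (\ref{eq:clique0}) that were proved in the Theorem. The first observation is that $\P(Y_i=1\mid\mathcal{C})=\P(Y_i=1,\mathcal{C})/\P(\mathcal{C})$ with $\P(\mathcal{C})=B_0(0)$ a fixed number not depending on $i$; this is exactly the normalising constant that the symbol $\propto$ suppresses, so it is enough to compute the joint probabilities $\P(Y_i=1,\mathcal{C})$ and $\P(Y_i=0,\mathcal{C})$.

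To compute $\P(Y_i=1,\mathcal{C})$ I would split the event according to the value $m$ taken by $Z_{i-1}$. On $\{Y_i=1\}$ one automatically has $Z_i=Z_{i-1}+1$, so the events $\{Y_i=1,\ Z_i=m+1,\ Z_{i-1}=m,\ \mathcal{C}\}$, as $m$ ranges over the (finitely many) admissible values, form a partition of $\{Y_i=1,\mathcal{C}\}$. Summing Eq.~(\ref{eq:clique1}) over $m$ then gives at once
\begin{equation*}
\P(Y_i=1,\mathcal{C})=\sum_m F_{i-1}(m)\,\pi_i\,B_i(m+1),
\end{equation*}
which is the first claim up to the factor $1/\P(\mathcal{C})$. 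The case $Y_i=0$ is handled the same way: there $Z_i=Z_{i-1}$, so the same partition argument together with Eq.~(\ref{eq:clique0}) yields $\P(Y_i=0,\mathcal{C})=\sum_m F_{i-1}(m)(1-\pi_i)B_i(m)$.

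I do not expect any real obstacle: the only point deserving a line of care is that the decomposition over $m$ be exhaustive and disjoint — the sum must run over every value $Z_{i-1}$ can take (so $0\le m\le i-1$, the summand vanishing once $m>n_1$) — so that the marginalisation is an exact equality and not merely an inequality. As an independent check one can instead combine Eq.~(\ref{eq:separator}) with the sampling formula (\ref{eq:sampling}): $\P(Y_i=1,\mathcal{C})=\sum_m \P(Y_i=1\mid Z_{i-1}=m,\mathcal{C})\,\P(Z_{i-1}=m,\mathcal{C})=\sum_m \frac{\pi_i B_i(m+1)}{B_{i-1}(m)}\,F_{i-1}(m)B_{i-1}(m)$, where the $B_{i-1}(m)$ cancel and we recover $\sum_m F_{i-1}(m)\pi_i B_i(m+1)$. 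That the two routes agree is reassuring and confirms nothing subtle is at play.
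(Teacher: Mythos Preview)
Your argument is correct and is exactly the route the paper has in mind: marginalise the clique identities (\ref{eq:clique1}) and (\ref{eq:clique0}) over $m$ and divide by the constant $\P(\mathcal{C})$; the paper in fact states this last corollary without proof, so there is nothing further to compare. One remark: your computation gives $\sum_m F_{i-1}(m)\,\pi_i\,B_i(m+1)$ for the first line, not $\sum_m F_i(m)\,\pi_i\,B_i(m+1)$ as printed --- your version is the right one, and the index $i$ on $F$ in the paper (like the subscript $1$ in $\P(Y_1=0\mid\mathcal{C})$) is a typographical slip.
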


\bibliographystyle{unsrt}
\bibliography{waffect}

\end{document}